\documentclass[11pt, a4paper]{article}

\usepackage[whole]{bxcjkjatype}

\usepackage{lipsum}
\usepackage{amsfonts}
\usepackage{mathtools}
\usepackage{amsmath}
\usepackage{amssymb}
\usepackage{amsfonts}
\usepackage{multirow}
\usepackage{bm} 
\usepackage{epstopdf}
\usepackage{float}
\usepackage{natbib}
\usepackage{amsmath}
\usepackage{amsfonts}
\usepackage{amsthm}
\usepackage{url}

\mathtoolsset{showonlyrefs=true}
\DeclarePairedDelimiter\paren{\lparen}{\rparen}

\usepackage{algorithm,algorithmic} 

\usepackage{tikz}
\usepackage{pgfplotstable}
\usetikzlibrary{arrows,positioning,plotmarks,external,patterns,angles,
decorations.pathmorphing,backgrounds,fit,shapes,graphs,calc,spy}
\pgfplotsset{compat=1.14}

\DeclareMathOperator*{\argmin}{argmin}
\DeclareMathOperator*{\argmax}{argmax}

\newcommand{\rmd}{\mathrm d}

\newcommand{\rmN}{\mathrm N}
\newcommand{\bbR}{\mathbb R}

\newtheorem{theorem}{Theorem}
\newtheorem{lemma}{Lemma}
\newtheorem{proposition}{Proposition}
\newtheorem{corollary}{Corollary}
\newtheorem{remark}{Remark}
\newtheorem{algoalgo}{Algorithm}

\usepackage{typearea}
\typearea{12}

\begin{document}

\title{Piecewise monotone estimation\\ in one-parameter exponential family}
	
\author{Takeru Matsuda\thanks{Department of Mathematical Informatics, University of Tokyo \& Statistical Mathematics Unit, RIKEN Center for Brain Science, e-mail: 
		\texttt{matsuda@mist.i.u-tokyo.ac.jp}} \and Yuto Miyatake\thanks{Cybermedia Center, Osaka University, e-mail: 
		\texttt{miyatake@cas.cmc.osaka-uac.jp}}}
		
\date{}

\maketitle

\begin{abstract}
The problem of estimating a piecewise monotone sequence of normal means is called the nearly isotonic regression.
For this problem, an efficient algorithm has been devised by modifying the pool adjacent violators algorithm (PAVA).
In this study, we investigate estimation of a piecewise monotone parameter sequence for general one-parameter exponential families such as binomial, Poisson and chi-square.
We develop an efficient algorithm based on the modified PAVA, which utilizes the duality between the natural and expectation parameters.
We also provide a method for selecting the regularization parameter by using an information criterion.
Simulation results demonstrate that the proposed method detects change-points in piecewise monotone parameter sequences in a data-driven manner.
Applications to spectrum estimation, causal inference and discretization error quantification of ODE solvers are also presented.
\end{abstract}

	\section{Introduction}
There are many phenomena that involve monotonicity, such as the dose-response curve in medicine and the demand/supply curves in economics.
Parameter estimation under such order constraints is a typical example of shape constrained inference \citep{bb72,rwd88,vE06,groeneboom2014nonparametric}.
For example, suppose that we have $n$ normal observations $X_i \sim {\rm N}(\mu_i,1)$ for $i=1,\dots,n$, where $\mu_1 \leq \mu_2 \leq \cdots \leq \mu_n$ is a monotone sequence of normal means.
In this setting, the maximum likelihood estimate (MLE) of $\mu$ is the solution of the constrained optimization
\begin{align}
	\hat{\mu} = \argmin_{\mu_1 \leq \cdots \leq \mu_n} \frac{1}{2} \sum_{i=1}^n (X_i-\mu_i)^2, \label{iso}
\end{align}
which coincides with the isotonic regression of $X_1,\dots,X_n$ with uniform weights and  efficiently solved by the pool adjacent violators algorithm (PAVA) ~\citep[Chapter~1]{rwd88}.
Statistical properties of isotonic regression estimators have been extensively studied such as the convergence rates and risk bounds \citep{bellec2018sharp,groeneboom2014nonparametric,guntuboyina2018nonparametric,han2019isotonic}.

Whereas isotonic regression is useful for estimating a monotone sequence of normal means, the order constraint may be violated at a few change-points in practice.
In other words, the parameter sequence may be only piecewise monotone.
Thus, \cite{tibshirani2011nearly} investigated the problem of estimating a piecewise monotone sequence of normal means and called it the nearly isotonic regression.
Specifically, for $n$ (homoscedastic) normal observations $X_i \sim {\rm N}(\mu_i,1)$ for $i=1,\dots,n$, they formulated the problem as the regularized optimization given by
\begin{align}
	\hat{\mu}_{\lambda} = \argmin_{\mu} \frac{1}{2} \sum_{i=1}^n (X_i - \mu_i)^2 + \lambda \sum_{i=1}^{n-1} (\mu_i-\mu_{i+1})_+, \label{org_neariso}
\end{align}
where $(a)_+=\max(a,0)$ and $\lambda>0$ is the regularization parameter.
Then, they developed an efficient algorithm for this problem by modifying the PAVA.
They also showed that the number of joined pieces provides an unbiased estimate of the degrees of freedom, which enables data-driven selection of the regularization parameter $\lambda$.

In this study, we investigate estimation of a piecewise monotone parameter sequence for general one-parameter exponential families \citep{Efron} including (heteroscedastic) normal, binomial, Poisson and (scaled) chi-square. 
Suppose that we have $n$ observations $X_i \sim p_i(x_i \mid \theta_i)$ for $i=1,\dots,n$, where each $p_i(x_i \mid \theta_i)$ is a one-parameter exponential family defined by
\begin{align*}
	p_i(x_i \mid \theta_i) = h_i(x_i) \exp (\theta_i x_i - w_i \psi(\theta_i)).
\end{align*}
For example, the binomial distribution ${\rm Bi}(N_i,r_i)$ with $N_i$ (fixed) trials of success probability $r_i$ corresponds to $x_i \in \{ 0,1,\dots,N_i \}$, $h_i(x_i) = N_i!/(x_i! (N_i-x_i)!)$, $w_i=N_i$ and $\psi(\theta_i) = \log (1+e^{\theta_i})$, where $r_i=e^{\theta_i}/(1+e^{\theta_i})$.
The Poisson distribution ${\rm Po}(\lambda_i)$ with mean $\lambda_i$ corresponds to $x_i \in \{ 0,1,\dots \}$, $h_i(x_i) = 1/(x_i!)$, $w_i=1$ and $\psi(\theta_i) = e^{\theta_i}$, where $\lambda_i=e^{\theta_i}$.
The scaled chi-square distribution $s_i \chi^2(d_i)$ with scale $s_i$ and $d_i$ degrees of freedom corresponds to $x_i \geq 0$, $h_i(x_i) = x_i^{d_i/2-1}/\Gamma(d_i/2)$, $w_i=d_i/2$ and $\psi(\theta_i) = -\log (-\theta_i)$, where $s_i=-1/(2 \theta_i)$.
To estimate the piecewise monotone sequence $\theta=(\theta_1,\dots,\theta_n)$, we consider the regularized estimator  defined by
\begin{align}
	\hat{\theta}_{\lambda} &= \argmin_{\theta} -\sum_{i=1}^n \log p_i(X_i \mid \theta_i) + \lambda \sum_{i=1}^{n-1} (\theta_i-\theta_{i+1})_+ \nonumber \\
	&= \argmin_{\theta} \sum_{i=1}^n (- \theta_i X_i + w_i \psi(\theta_i)) + \lambda \sum_{i=1}^{n-1} (\theta_i-\theta_{i+1})_+, \label{ggneariso}
\end{align}
where $\lambda > 0$ is the regularization parameter.
We develop an efficient algorithm for this optimization problem by extending the modified PAVA and utilizing the duality between the natural parameters $\theta_i$ and expectation parameters $\eta_i={\rm E}_{\theta_i} [X_i]=w_i \psi'(\theta_i)$.
We also provide a method for selecting the regularization parameter $\lambda$ by using an information criterion.
Simulation results demonstrate that the proposed method successfully detects change-points of $\theta$ in a data-driven manner.
We present applications to spectrum estimation, causal inference and discretization error quantification of ODE solvers.

This paper is organized as follows.
In Section~\ref{sec:main}, we develop a method for piecewise monotone estimation in general one-parameter exponential familes.
In Section~\ref{sec:simulation}, simulation results are presented.
In Section~\ref{sec:application}, applications to spectrum estimation, causal inference and discretization error quantification are presented.
In Section~\ref{sec:conclusion}, concluding remarks are given.
In Appendix, a brief review on (nearly) isotonic regression, technical proofs, and additional experiments are provided.
A julia package of the proposed method is available online at \url{https://github.com/yutomiyatake/IsoFuns.jl}.

\section{Proposed method}\label{sec:main}
\subsection{Estimation algorithm}
We propose the following algorithm for computing the regularization path of the estimator \eqref{ggneariso}.
This algorithm outputs the set of critical points (knots) $\lambda_0=0,\lambda_1,\dots,\lambda_T$ and the estimate $\hat{\eta}_{\lambda_t}=\psi'(\hat{\theta}_{\lambda_t})$ at each critical point.
Note that $\psi'$ is monotonically increasing because $\psi$ is strictly convex for exponential families \citep{Efron}.
Since the solution $\hat{\eta}_{\lambda}=\psi'(\hat{\theta}_{\lambda})$ is piecewise linear with respect to $\lambda$ as shown below (Theorem~\ref{th:weighted}), the solution for general $\lambda$ is readily obtained by linear interpolation.

\begin{algoalgo}\label{alg:wmpava}
	\mbox{}
	\begin{itemize}
		\item Input: $z \in \mathbb{R}^n$ (observation), $w \in \mathbb{R}^n$ (weight)
		
		\item Output: $\lambda_1,\dots,\lambda_T$ (knot), $\hat{\eta}_{\lambda_1},\dots,\hat{\eta}_{\lambda_T}  \in \mathbb{R}^n$ (estimate)
		
		\item Start with $t=0$, $\lambda_0=0$, $\hat{\eta}_{\lambda_0}=z$ and $K=n$ clusters $A_j=\{ j \}$ with value $y_{A_j}=z_j$ for $j=1,\dots,n$.
		
		\item Repeat:
		\begin{itemize}
			\item Set {$s_0=s_K=0$} and {$s_j=I(\hat{\eta}_{\lambda_{t},\min A_j}-\hat{\eta}_{\lambda_{t},\max A_j+1}>0)$} for {$j=1,\dots,K-1$}.
			
			\item Compute $m_j=(s_{j-1}-s_j)/(\sum_{i \in A_j} w_i)$ for $j=1,\dots,K$.
			
			\item Compute {$t_{j,j+1}=\lambda_t+(y_{A_{j+1}}-y_{A_j})/(m_j-m_{j+1})$} for $j=1,\dots,K-1$.
			
			\item If {$t_{j,j+1} \leq \lambda_t$} for every $j$, then terminate.
			
			\item Set {$j_*=\argmin_j \{ t_{j,j+1} \mid t_{j,j+1}>\lambda_t \}$} and {$\lambda_{t+1}=t_{j_*,j_*+1}$}.
			
			\item Update $y_{A_j}$ to $y_{A_j}+m_j(\lambda_{t+1}-\lambda_t)$ and set $\hat{\eta}_{\lambda_{t+1},i}=y_{A_j}$ for $i \in A_j$.
			
			\item Merge $A_{j_*+1}$ into $A_{j_*}$ and renumber $A_{j_*+2},\dots,A_K$ to $A_{j_*+1},\dots,A_{K-1}$.
			
			\item Decrease $K$ by one and increase $t$ by one.
		\end{itemize}
	\end{itemize}
\end{algoalgo}

Algorithm~\ref{alg:wmpava} can be viewed as a weighted version of the modified PAVA by \cite{tibshirani2011nearly}, where the total weight $\sum_{i \in A_j} w_i$ replaces the cardinality $|A_j|$ for each $A_j$.

\begin{remark}
	Algorithm~\ref{alg:wmpava} can be intuitively understood by using a physical model of inelastic collisions \citep[cf. ][]{sibuya1990equipartition}.
	Suppose that there are $n$ free particles of mass 1 moving on the one-dimensional line, which are numbered $1,\dots,n$ from the left, and the $i$th particle has mass $w_i$ and velocity $z_i$, whose sign represents the direction of the motion.
	The particles form clusters by perfectly inelastic collisions.
	For example, if the first and second particles collide ($z_1>z_2$), then they stick together and become a cluster of mass $w_1+w_2$ and velocity $(w_1 z_1+w_2 z_2)/(w_1+w_2)$.
	In general, if a cluster of mass $m_1$ and velocity $y_1$ collides with another cluster of mass $m_2$ and velocity $y_2<y_1$, then they form a cluster of mass $m_1+m_2$ and velocity $(m_1 y_1+m_2 y_2)/(m_1+m_2)$.
	Then, collisions cease within a finite time and eventually the particles are grouped into several clusters.
	Algorithm~\ref{alg:wmpava} can be viewed as simulating these collusions, where the regularization parameter $\lambda$ specifies the elapsed time from the beginning and each critical point $\lambda_t$ corresponds to the moment of collision.
	
	For simplicity, we assumed that a simultaneous collision of more than two clusters does not occur in Algorithm~\ref{alg:wmpava}.
	While this assumption is satisfied almost surely for continuous distributions such as Gaussian and chi-square, it may be violated for discrete distributions such as binomial and Poisson.
	Our julia package at \url{https://github.com/yutomiyatake/IsoFuns.jl} deals with such collisions properly.
\end{remark}

The following lemma is critical in showing the validity of Algorithm~\ref{alg:wmpava}.
Its proof is given in Appendix.

\begin{lemma}\label{lem:merge}
	If $(\hat{\theta}_{\bar{\lambda}})_i=(\hat{\theta}_{\bar{\lambda}})_{i+1}$ for some $\bar{\lambda}$, then $(\hat{\theta}_{\lambda})_i=(\hat{\theta}_{\lambda})_{i+1}$ for every $\lambda \geq \bar{\lambda}$.
\end{lemma}

From Lemma~\ref{lem:merge}, we obtain the following theorem by using a similar argument to \cite{friedman2007pathwise} and \cite{tibshirani2011nearly}.

\begin{theorem}\label{th:weighted}
	Let $\lambda_1,\dots,\lambda_T$ and $\hat{\eta}_{\lambda_1},\dots,\hat{\eta}_{\lambda_T}$ be the output of Algorithm~\ref{alg:wmpava} on $w=(w_1,\dots,w_n)$ and $z=(X_1/w_1,\dots,X_n/w_n)$. 
	Then, the estimator $\hat{\theta}_{\lambda}$ with $\lambda \in [\lambda_t,\lambda_{t+1}]$ in \eqref{ggneariso} is given by 
	\begin{align}
		(\hat{\theta}_{\lambda})_i = (\psi')^{-1} \left( \frac{\lambda_{t+1}-\lambda}{\lambda_{t+1}-\lambda_t} (\hat{\eta}_{\lambda_t})_i + \frac{\lambda-\lambda_t}{\lambda_{t+1}-\lambda_t} (\hat{\eta}_{\lambda_{t+1}})_i \right) \label{interpolate}
	\end{align}
	for $i=1,\dots,n$.
\end{theorem}
\begin{proof}
		
		Since the objective function of the optimization in \eqref{ggneariso} is strictly convex, it has a unique solution satisfying the subgradient condition \citep{bertsekas1997nonlinear}:
		\begin{align}
			-x_i + w_i \psi'((\hat{\theta}_{\lambda})_i) + \lambda (\xi_i-\xi_{i-1}) = 0 \label{KKT1}
		\end{align}
		for $i=1,\dots,n$, where $\xi_0=\xi_n=0$ and
		\begin{align}
			\xi_i &
			\begin{cases}
				= 1 & ((\hat{\theta}_{\lambda})_i > (\hat{\theta}_{\lambda})_{i+1}) \\
				= 0 & ((\hat{\theta}_{\lambda})_i < (\hat{\theta}_{\lambda})_{i+1})\\
				\in [0,1] & ((\hat{\theta}_{\lambda})_i = (\hat{\theta}_{\lambda})_{i+1})
			\end{cases} \label{KKT2}
		\end{align}
		for $i=1,\dots,n-1$.
		We show that \eqref{interpolate} satisfies \eqref{KKT1} in the following.
		
		At $\lambda=0$, the solution of \eqref{KKT1} is clearly given by $(\hat{\theta}_0)_i=(\psi')^{-1}(x_i/w_i)$ for $i=1,\dots,n$.
		From the initial condition of Algorithm~\ref{alg:wmpava}, it coincides with \eqref{interpolate} with $\lambda=0$ and $t=0$.
		
		Suppose that $\hat{\theta}_{\lambda}$ is clustered into $A_1,\dots,A_K$ at $\bar{\lambda} \geq 0$: $(\hat{\theta}_{\bar{\lambda}})_i = (\hat{\theta}_{\bar{\lambda}})_{A_j}$ for $i \in A_j$ and $j=1,\dots,K$, where $(\hat{\theta}_{\bar{\lambda}})_{A_j} \neq (\hat{\theta}_{\bar{\lambda}})_{A_{j+1}}$ for $j=1,\dots,K-1$.
		We consider the change of $\hat{\theta}_{\lambda}$ as $\lambda$ increases from $\bar{\lambda}$.
		From Lemma~\ref{lem:merge}, the clustering structure of $\hat{\theta}_{\lambda}$ remains the same and thus $\xi_1,\dots,\xi_n$ are constant until some neighboring clusters merge.
		Thus, by summing up \eqref{KKT1} for $i \in A_j$, we find that $\hat{\theta}_{\lambda}$ changes linearly with respect to $\lambda$ as long as the clustering structure remains the same:
		\begin{align}
			\psi'((\hat{\theta}_{\lambda})_{A_j}) = \left( \sum_{i \in A_j} w_i \right)^{-1} \left( \sum_{i \in A_j} x_i - \lambda (\xi_{\max A_j}-\xi_{\min A_j - 1}) \right), \label{linear}
		\end{align}
		which yields
		\begin{align}
			\psi'((\hat{\theta}_{\lambda})_{A_j}) - \psi'((\hat{\theta}_{\bar{\lambda}})_{A_j}) = \left( \sum_{i \in A_j} w_i \right)^{-1}  (\lambda-\bar{\lambda}) (\xi_{\min A_j - 1}-\xi_{\max A_j}).
		\end{align}
		Therefore, two clusters $A_j$ and $A_{j+1}$ merge ($(\hat{\theta}_{\lambda})_{A_j}=(\hat{\theta}_{\lambda})_{A_{j+1}}$) at
		\begin{align}
			\lambda = \bar{\lambda} + \frac{\psi'((\hat{\theta}_{\bar{\lambda}})_{A_{j+1}}) - \psi'((\hat{\theta}_{\bar{\lambda}})_{A_j})}{m_j-m_{j+1}}, \label{merge_lambda}
		\end{align}
		with the merged value
		\begin{align}
			(\hat{\theta}_{\lambda})_{A_j} = (\hat{\theta}_{\lambda})_{A_{j+1}} = (\psi')^{-1} \left( \psi'((\hat{\theta}_{\bar{\lambda}})_{A_j}) + m_j (\lambda-\bar{\lambda}) \right), \label{merge_value}
		\end{align}
		where
		\begin{align}
			m_j = \frac{\xi_{\min A_j - 1}-\xi_{\max A_j}}{\sum_{i \in A_j} w_i}.
		\end{align}
		By putting $y_{A_j} = \psi'((\hat{\theta}_{\bar{\lambda}})_{A_j})$ and $s_j = \xi_{\max A_j}$, the second term in \eqref{merge_lambda} coincides with $(y_{A_{j+1}}-y_{A_j})/(m_j-m_{j+1})$ in Algorithm~\ref{alg:wmpava} and \eqref{merge_value} coincides with the updated value of $y_{A_j}$ in Algorithm~\ref{alg:wmpava}.
		Hence, Algorithm~\ref{alg:wmpava} computes the change-points $\lambda_1,\dots,\lambda_T$ of the clustering structure and the solution of \eqref{KKT1} at each $\lambda_1,\dots,\lambda_T$ correctly.
		From the linear interpolation property \eqref{linear}, the solution of \eqref{KKT1} for general $\lambda \in [\lambda_t,\lambda_{t+1}]$ is given by \eqref{interpolate}.
	\end{proof}
	
	
	Here, we summarize the specializations of Theorem~\ref{th:weighted} to the heteroscedastic normal, binomial, Poisson and chi-square for convenience.
	
	\begin{corollary}
		Let $X_i \sim {\rm N}(\mu_i,\sigma_i^2)$ for $i=1,\dots,n$.
		Then, the estimator
		\begin{align}
			\hat{\mu}_{\lambda} &= \argmin_{\mu} -\sum_{i=1}^n \log p_i(X_i \mid \mu_i) + \lambda \sum_{i=1}^{n-1} (\mu_i-\mu_{i+1})_+ \label{hetero}
		\end{align}
		is given by the output of Algorithm~\ref{alg:wmpava} on $z=(x_1,\dots,x_n)$ and $w=(\sigma_1^{-2},\dots,\sigma_n^{-2})$.
	\end{corollary}
	\begin{proof}
		The optimization \eqref{hetero} is rewritten as
		\begin{align}
			\hat{\mu}_{\lambda} &= \argmin_{\mu}\sum_{i=1}^n \frac{(X_i-\mu_i)^2}{2 \sigma_i^2} + \lambda \sum_{i=1}^{n-1} (\mu_i-\mu_{i+1})_+ \\
			&= \argmin_{\mu}\sum_{i=1}^n \left( -\mu_i \frac{X_i}{\sigma_i^2} + \sigma_i^{-2} \frac{\mu_i^2}{2} \right) + \lambda \sum_{i=1}^{n-1} (\mu_i-\mu_{i+1})_+,
		\end{align}
		which has the form of \eqref{ggneariso} with $X_i$ replace by $\sigma_i^{-2} X_i$ and $\theta_i=\mu_i$, $\psi(\theta)=\theta^2/2$ and $w_i=\sigma_i^{-2}$.
		Thus, from Theorem~\ref{th:weighted}, its solution is given by the output of Algorithm~\ref{alg:wmpava} on $z=(\sigma_1^{-2} x_1/\sigma_1^{-2},\dots,\sigma_n^{-2} x_n/\sigma_n^{-2})=(x_1,\dots,x_n)$ and $w=(\sigma_1^{-2},\dots,\sigma_n^{-2})$.
	\end{proof}
	
	\begin{corollary}
		Let $X_i \sim {\rm Bi}(N_i,r_i)$ for $i=1,\dots,n$.
		Then, the estimator \eqref{ggneariso} is given by the output of Algorithm~\ref{alg:wmpava} on $z=(x_1/N_1,\dots,x_n/N_n)$ and $w=(N_1,\dots,N_n)$. 
	\end{corollary}
	
	\begin{corollary}
		Let $X_i \sim {\rm Po}(\lambda_i)$ for $i=1,\dots,n$.
		Then, the estimator \eqref{ggneariso} is given by the output of Algorithm~\ref{alg:wmpava} on $z=(x_1,\dots,x_n)$ and $w=(1,\dots,1)$. 
	\end{corollary}
	
	\begin{corollary}
		Let $X_i \sim \sigma_i^2 \chi^2(d_i)$ for $i=1,\dots,n$.
		Then, the estimator \eqref{ggneariso} is given by the output of Algorithm~\ref{alg:wmpava} on $z=(x_1/d_1,\dots,x_n/d_n)$ and $w=(d_1/2,\dots,d_n/2)$. 
	\end{corollary}
	
	In some situations, we may have bound constraints on $\theta$ (e.g. Section~\ref{sec:ODE}).
	Such cases can be solved by simply thresholding the original solution as follows.
	Its proof is given in Appendix.
	
	\begin{proposition}\label{th:bounded}
		Let
		\begin{align}
			\hat{\theta}_{\lambda,\alpha,\beta} &= \argmin_{\theta \in [\alpha,\beta]^n} -\sum_{i=1}^n \log p(X_i \mid \theta_i) + \lambda \sum_{i=1}^{n-1} (\theta_i-\theta_{i+1})_+. \label{bounded}
		\end{align}
		Then, 
		\begin{align*}
			(\hat{\theta}_{\lambda,\alpha,\beta})_i = \min( \max ((\hat{\theta}_{\lambda})_i, \alpha), \beta)
		\end{align*}
		for $i=1,\dots,n$, where $\hat{\theta}_{\lambda}$ is given by \eqref{ggneariso}.
	\end{proposition}
	
	\subsection{Information criterion}
	In practice, the selection of the regularization parameter $\lambda$ is a crucial issue like other regularized estimators such as LASSO.
	Here, we propose a method for selecting $\lambda$ based on data by using an information criterion \citep{anderson2004model,konishi2008information}.
	
	First, we recall the following result by \cite{tibshirani2011nearly} for nearly isotonic regression \eqref{org_neariso}.
	
	\begin{proposition}\citep{tibshirani2011nearly}\label{prop:df}
		For the nearly isotonic regression $\hat{\mu}_{\lambda}$ in \eqref{org_neariso}, let $K_{\lambda}$ be the number of joined pieces in $\hat{\mu}_{\lambda}$.
		Then, the quantity
		\begin{align*}
			\hat{C}_p (\lambda) = {\| \hat{\mu}_{\lambda}-X \|^2} + 2 \sigma^2 K_{\lambda} - n\sigma^2 
		\end{align*}
		is an unbiased estimate of the mean squared error of $\hat{\mu}_{\lambda}$:
		\begin{align*}
			{\rm E}_{\mu} [\hat{C}_p(\lambda)] = {\rm E}_{\mu} \left[ {\| \hat{\mu}_{\lambda}-\mu \|^2} \right].
		\end{align*}
	\end{proposition}
	
	Proposition~\ref{prop:df} indicates that the number of joined pieces $K_{\lambda}$ is an unbiased estimate of the degrees of freedom \citep{efron2004estimation} of the nearly isotonic regression \eqref{org_neariso}.
	Based on this result, \cite{tibshirani2011nearly} selected the regularization parameter $\lambda$ by minimizing $\hat{C}_p(\lambda)$ among the knots obtained from the modified PAVA.
	Note that a similar result on the degrees of freedom has been obtained for other estimators such as isotonic regression \citep{meyer2000degrees} and LASSO \citep{zou2007degrees}.
	In particular, \cite{zou2007degrees} showed that the number of nonzero regression coefficients is an unbiased estimate of the degrees of freedom for LASSO, and proposed to use it as the penalty term of AIC and BIC.
	
	Now, we propose an information criterion for the estimator \eqref{ggneariso}.
	Following the convention of information criteria \citep[Chapter~3]{konishi2008information}, we interpret each $X_i$ as the sufficient statistic $X_{i1}+\dots+X_{im_i}$ for $\theta_i$ from independent samples $X_{ij} \sim \tilde{p}_i(x_i \mid \theta_i)$ for $j=1,\dots,m_i$, and consider the asymptotics $m_i \to \infty$ for every $i$.
	For example, when each $X_i \sim {\rm Bi} (N_i,r_i)$ is a binomial random variable, $m_i$ is set to $N_i$ and each $X_{ij}$ is taken to be the Bernoulli random variable with success probability $r_i$.
	The asymptotics $m_i \to \infty$ corresponds to $N_i \to \infty$, $\lambda_i \to \infty$ and $a_i \to \infty$ in the binomial, Poisson and gamma models, respectively.
	Then, we consider prediction of $Y_i \sim p_i(y_i \mid \theta_i)$ for $i=1,\dots,n$ by using the estimator $\hat{\theta}_{\lambda}$ in \eqref{ggneariso}.
	The prediction error is evaluated by the Kullback--Leibler discrepancy defined as
	\begin{align*}
		D(\theta,\hat{\theta}_{\lambda}) = {\rm E}_{\theta} \left[ -\sum_{i=1}^n \log p_i(Y_i \mid (\hat{\theta}_{\lambda})_i) \right],
	\end{align*}
	which is equivalent to the Kullback--Leibler divergence between $p(y \mid \theta)$ and $p(y \mid \hat{\theta})$ up to an additive constant.
	By using the unbiased estimate of the degrees of freedom in Proposition~\ref{prop:df}, we adopt
	\begin{align}\label{AIC}
		{\rm AIC}(\lambda) = -2 \sum_{i=1}^n \log p_i(X_i \mid (\hat{\theta}_{\lambda})_i) + 2 K_{\lambda}
	\end{align}
	as an approximately unbiased estimator of the expected Kullback--Leibler discrepancy.
	From the same argument with the usual derivation of information criteria, the bias evaluation reduces to that for the Gaussian model up to $O (m_i^{-1})$ as $m_i \to \infty$  \citep{konishi2008information}.
	Therefore, by using Proposition~\ref{prop:df},
	\[
	{\rm E}_{\theta} [{\rm AIC}(\lambda)] = 2 {\rm E}_{\theta} [D(\theta,\hat{\theta}_{\lambda})] + O (m^{-1})
	\]
	as $m=\min_i m_i \to \infty$.
	Thus, we select the regularization parameter by minimizing ${\rm AIC}(\lambda)$ among knots:
	\begin{align*}
		\hat{\lambda} = \lambda_{\hat{k}}, \quad \hat{k} = \argmin_k {\rm AIC}(\lambda_k).
	\end{align*}
	We will show the validity of this method by simulation in Section~\ref{sec:simulation}.
	
	\begin{remark}
		\cite{ninomiya2016aic} derived an information criterion for $l_1$-regularized estimators in generalized linear models, which can be viewed as an extension of the result of \cite{zou2007degrees} on the degrees of freedom of LASSO in Gaussian linear models.
		Their criterion is an approximately unbiased estimator of the expected Kullback--Leibler discrepancy and its bias correction term does not admit a simple closed-form solution, which is similar to TIC and GIC \citep{konishi2008information}.
		However, their simulation results imply that the bias correction term can be approximated well by twice the number of non-zero regression coefficients, which is shown to be an unbiased estimate of the degrees of freedom in the case of Gaussian linear models \citep{zou2007degrees}, especially when the sample size is large.
		Similarly, our simulation results below indicate that the unbiased estimate of the degrees of freedom in Gaussian nearly isotonic regression (Proposition~\ref{prop:df}) works well as a bias correction term of information criterion as long as the distribution is not very far from Gaussian.
		It is an interesting future problem to develop a more rigorous theory for this.
	\end{remark}
	
	\section{Simulation results}\label{sec:simulation}
	We check the performance of the proposed method for the binomial distribution.
	For $i=1,\dots,100$, let $X_i$ be a sample from the binomial distribution with $N_i$ trials and success probability $r_i$, where $r_1,\dots,r_{100}$ is a piecewise monotone sequence defined by
	\begin{align*}
		r_i = \begin{cases} 0.2 + 0.6 \cdot \frac{i-1}{49} & (i=1,\dots,50) \\ 0.2 + 0.6 \cdot \frac{i-51}{49} & (i=51,\dots,100) \end{cases}.
	\end{align*}
	We apply the proposed method to estimate $r_1,\dots,r_{100}$ from $X_1,\dots,X_{100}$.
	
	First, we set $N_i=10$ for $i=1,\dots,100$.
	Figure~\ref{fig:binomial} shows $\hat{r}_{\lambda}$ for several knot values of $\lambda$.
	Similarly to the original nearly isotonic regression, the estimate is piecewise monotone and the number of joined pieces decreases as $\lambda$ increases.
	In this case, $\hat{r}_{\lambda}$ becomes monotone at the final knot $\lambda = 50.5$ and it coincides with the result of the proposed method.
	Figure~\ref{fig:binomial_AIC} plots ${\rm AIC}(\lambda)$ with respect to $\lambda$.
	It takes minimum at $\hat{\lambda}=6.04$, which corresponds to the third panel of Figure~\ref{fig:binomial}.
	In this way, the proposed information criterion enables us to detect change-points in the parameter sequence of exponential families in a data-driven manner.
	
	\input{binomial.tex}
	
	\begin{figure}[htbp]
\centering
\begin{tikzpicture}
\tikzstyle{every node}=[]
\begin{axis}[
width=7cm,
xlabel={$\lambda$},
ylabel={${\rm AIC}(\lambda)$},
]
\addplot[very thick, color = blue, opacity=1,
] table {
0.5 412.952678644785
0.666666666666666 416.226416792188
0.666666666666667 414.226416792188
0.75 415.06984896609
0.999999999999999 418.665706794061
1 396.665706794061
1 388.665706794061
1 386.665706794061
1 384.665706794061
1.33333333333333 397.869157098019
1.33333333333333 395.869157098019
1.5 399.519049854198
1.66666666666667 403.621189899287
1.66666666666667 401.621189899288
1.66666666666667 399.621189899288
1.66666666666667 397.621189899288
1.66666666666668 395.621189899288
1.8 398.316614654518
1.85714285714286 398.287116775056
2 401.303287896656
2 389.303287896656
2 385.303287896656
2 387.303287896656
2 385.303287896656
2 385.303287896656
2.00000000000001 383.303287896656
2.00000000000001 381.303287896656
2.00000000000001 379.303287896656
2.00000000000002 379.303287896656
2.00000000000002 377.303287896656
2.25 380.057626441915
2.33333333333333 381.539253198299
2.42857142857143 376.816084878091
2.5 375.692959346052
2.57142857142857 374.495379046065
3 376.554590135697
3 374.554590135697
3 372.554590135698
3.36363636363637 373.578564587419
3.4 371.861416833865
3.4 369.861416833865
3.46153846153847 368.171664107493
4 366.275015445552
4.66666666666667 368.920526163425
4.88888888888889 367.413010768792
5.00000000000001 365.772691548539
5.42857142857143 364.989822427083
6.04761904761904 364.626161974384
10.5 372.297926134714
11.7142857142857 373.226409706347
14 374.195094351666
33.0909090909091 413.814185589532
50.5283018867925 454.086637817417
};
\end{axis}
\end{tikzpicture} 
\caption{AIC for the binomial distribution ($N=10$).}
\label{fig:binomial_AIC}
\end{figure}
	
	Next, we set $N_i=N$ for $i=1,\dots,100$ with $N \in \{ 10,20,30,50 \}$.
	Figure~\ref{fig:binomial_bias} plots ${\rm E}_{\theta}[{\rm AIC}(\lambda)]$ and $2 {\rm E}_{\theta} [D(\theta,\hat{\theta}_{\lambda})]$ with respect to $\lambda$ for each value of $N$, where we used 10000 repetitions.
	They show similar behaviors and take minimum at similar values of $\lambda$.
	Thus, the proposed information criterion is approximately unbiased.
	The absolute bias $|{\rm E}_{\theta}[{\rm AIC}(\lambda)]-2 {\rm E}_{\theta} [D(\theta,\hat{\theta}_{\lambda})]|$ decreases as $N$ increases, which is compatible with the fact that the binomial distribution becomes closer to the normal distribution for larger $N$.
	
	\input{binomial_bias.tex}
	
	Finally, we examine the case where the number of trials is not constant:
	\begin{align*}
		N_i = \begin{cases} 30 & (i=1,4,\dots,100) \\ 40 & (i=2,5,\dots,98) \\ 50 & (i=3,6,\dots,99) \end{cases}.
	\end{align*}
	Figure~\ref{fig:binomial_bias_weighted} plots ${\rm E}_{\theta}[{\rm AIC}(\lambda)]$ and $2 {\rm E}_{\theta} [D(\theta,\hat{\theta}_{\lambda})]$ with respect to $\lambda$, where we used 10000 repetitions.
	The bias of the proposed information criterion is sufficiently small.
	Thus, this criterion works well for determining the regularization parameter $\lambda$ even when the number of trials is heterogeneous among samples.
	
	\input{binomial_bias_weighted.tex}
	
	See Appendix for a similar experiment on the chi-square distribution.
	
	\section{Applications}\label{sec:application}
	
	\subsection{Spectrum estimation}
	Spectrum analysis is an important step in time series analysis that reveals periodicities in time series data \citep{brillinger2001time,brockwell2009time}.
	Specifically, the spectral density function of a Gaussian stationary time series $X=(X_t \mid t \in \mathbb{Z})$ is defined as
	\begin{align*}
		p(f) = \sum_{k=-\infty}^{\infty} C_k \exp(-2 \pi i k f), \quad -\frac{1}{2} \leq f \leq \frac{1}{2},
	\end{align*}
	where $C_k={\rm Cov}[X_t X_{t+k}]$ is the autocovariance.
	Let
	\begin{align}
		p_j = \frac{1}{2 \pi T} \left| \sum_{t=1}^T x_t \exp \left( - \frac{2 \pi i j t}{T} \right) \right|^2, \quad j=1,\dots,\frac{T}{2},
	\end{align}
	be the periodogram of the observation $x_1,\dots,x_T$.
	Then, from the theory of the Whittle likelihood \citep{whittle1953estimation}, the distribution of the periodogram is well approximated by the independent chi-square distributions:
	\begin{align}
		p_j \sim \frac{p(j/T)}{2} \chi^2(2),  \quad j=1,\dots,\frac{T}{2}.
	\end{align}
	Based on this property, many methods have been developed to estimate the spectral density function by smoothing the periodogram \citep{brillinger2001time}.
	
	The spectral density function of real time series data often tends to be decreasing \citep{anevski2011monotone} such as $1/f$ fluctuation (power law), possibly with a few peaks corresponding to characteristic periodicities or dominant frequencies.
	Thus, the proposed method is considered to be useful for estimating such nearly monotone spectral density functions.
	Figure~\ref{fig:sunspot} shows the result on the Wolfer sunspot data, which is the annual number of recorded sunspots on the sun's surface for the period 1770-1869 \citep{brockwell2009time}.
	Note that the result is shown in log-scale following the convention of spectrum analysis, whereas we applied the proposed method to the raw periodogram.
	This figure indicates one dominant frequency around 0.1 cycle per year.
	This frequency corresponds well to the well-known characteristic period of approximately 11 years in the sunspot number.
	
	\begin{figure}[htbp]
\centering
\begin{tikzpicture}
\tikzstyle{every node}=[]
\begin{axis}[
width=7cm,
xlabel={cycle per year},
ylabel={log-spectrum}
]
\addplot[only marks,mark options={scale=0.3},
filter discard warning=false, unbounded coords=discard,
] table {
0.01020408163265306 2.8002417311409378
0.02040816326530612 2.534371158360778
0.030612244897959183 -1.781312494089522
0.04081632653061224 -1.2281844046166954
0.05102040816326531 -1.7918206115260376
0.061224489795918366 0.44691198291293543
0.07142857142857142 2.626191293538474
0.08163265306122448 2.3056500545167635
0.09183673469387756 3.2363554997315327
0.10204081632653061 1.98666611406473
0.11224489795918367 0.47697269621018584
0.12244897959183673 1.7602053697649525
0.1326530612244898 -0.6964906389780071
0.14285714285714285 0.4652137615797168
0.15306122448979592 -0.21187068436370077
0.16326530612244897 -0.510931992481446
0.17346938775510204 0.5957836571576057
0.1836734693877551 0.44203166895425744
0.19387755102040816 -1.183952110773851
0.20408163265306123 -2.89916575219991
0.21428571428571427 -0.031130560171614054
0.22448979591836735 -0.46356412749363396
0.23469387755102042 -0.691195573029122
0.24489795918367346 -0.717622821438716
0.25510204081632654 -0.7531290565772014
0.2653061224489796 -0.6339120566642374
0.2755102040816326 -1.898582401679784
0.2857142857142857 -3.5031768495969344
0.29591836734693877 -4.969799966417452
0.30612244897959184 -1.8537628839147162
0.3163265306122449 -1.1455827278434398
0.32653061224489793 -2.496942488037192
0.336734693877551 -3.501023419831666
0.3469387755102041 -1.8235220564884924
0.35714285714285715 -1.801981823061021
0.3673469387755102 -2.294977252865811
0.37755102040816324 -4.353389664961251
0.3877551020408163 -3.360830533301883
0.3979591836734694 -2.3467970992593257
0.40816326530612246 -3.072909131349413
0.41836734693877553 -2.7029092423585515
0.42857142857142855 -5.867249529152819
0.4387755102040816 -3.1955543369704076
0.4489795918367347 -3.150309447236216
0.45918367346938777 -3.256072246600163
0.46938775510204084 -1.6660215827382134
0.47959183673469385 -1.9721731732882677
0.4897959183673469 -2.4492010406963316
0.5 -4.728166420770917
};
\addplot[very thick, color = blue, opacity=1,
] table {
0.01020408163265306 2.8002417311409378
0.02040816326530612 2.534371158360778
0.030612244897959183 -0.09463378687804337
0.04081632653061224 -0.09463378687804337
0.05102040816326531 -0.09463378687804337
0.061224489795918366 0.44691198291293543
0.07142857142857142 2.4787093997231797
0.08163265306122448 2.4787093997231797
0.09183673469387756 3.1501520276671373
0.10204081632653061 1.98666611406473
0.11224489795918367 1.3116813513472863
0.12244897959183673 1.3116813513472863
0.1326530612244898 0.13540586060523832
0.14285714285714285 0.13540586060523832
0.15306122448979592 0.13540586060523832
0.16326530612244897 0.13540586060523832
0.17346938775510204 0.13540586060523832
0.1836734693877551 0.13540586060523832
0.19387755102040816 -0.7057732580403493
0.20408163265306123 -0.7057732580403493
0.21428571428571427 -0.7057732580403493
0.22448979591836735 -0.7057732580403493
0.23469387755102042 -0.7057732580403493
0.24489795918367346 -0.7057732580403493
0.25510204081632654 -0.7057732580403493
0.2653061224489796 -0.7057732580403493
0.2755102040816326 -1.898582401679784
0.2857142857142857 -2.056249983098172
0.29591836734693877 -2.056249983098172
0.30612244897959184 -2.056249983098172
0.3163265306122449 -2.056249983098172
0.32653061224489793 -2.209708409175487
0.336734693877551 -2.209708409175487
0.3469387755102041 -2.209708409175487
0.35714285714285715 -2.209708409175487
0.3673469387755102 -2.294977252865811
0.37755102040816324 -2.712801179755263
0.3877551020408163 -2.712801179755263
0.3979591836734694 -2.712801179755263
0.40816326530612246 -2.712801179755263
0.41836734693877553 -2.712801179755263
0.42857142857142855 -2.712801179755263
0.4387755102040816 -2.712801179755263
0.4489795918367347 -2.712801179755263
0.45918367346938777 -2.712801179755263
0.46938775510204084 -2.712801179755263
0.47959183673469385 -2.712801179755263
0.4897959183673469 -2.712801179755263
0.5 -4.728166420770917
};
\end{axis}
\end{tikzpicture}
\caption{Result on the Wolfer sunspot data. black: log-periodogram, blue: generalized nearly isotonic regression.}
\label{fig:sunspot}
\end{figure}
	
	Estimation of a monotone spectral density has been studied in \cite{anevski2011monotone}.
	They proposed two estimators given by the isotonic regression of the periodogram and log-periodogram.
	They derived their asymptotic distributions and showed that they are rate optimal.
	While the isotonic regression of the periodogram  has smaller asymptotic variance than that of the log-periodogram, the latter has the advantage of being applicable to both short-memory and long-memory processes.
	Note that these estimators were defined as the solutions of the least squares problems, not the maximizer of the Whittle likelihood.
	It is an interesting future work to extend the result of \cite{anevski2011monotone} to estimation of a piecewise monotone spectral density.
	
	\subsection{Causal inference}
	Regression discontinuity design (RDD) is a statistical method for causal inference in econometrics \citep[Chapter~4]{angrist2014mastering}.
	It focuses on natural experiment situations where the assignment of a treatment is determined by some threshold of a covariate.
	One example is a scholarship that is given to all students above a threshold grade.
	Then, the (local) treatment effect is estimated by taking the difference of the average outcomes of the treatment and control groups at the threshold, which are estimated by applying parametric or nonparametric regression to each group separately.
	
	Here, we explore a possibility of applying the proposed method to RDD.
	We use the minimum legal drinking age data, which is a well-known example of RDD \citep[Chapter~4]{angrist2014mastering}.
	This data consists of the number of fatalities (per one-hundred thousands) for several causes of death by age in month\footnote{\url{https://www.masteringmetrics.com/resources/}}.
	We applied the proposed method with the Poisson distribution to the number of fatalities induced by motor vehicle accidents in 19-23 years old.
	Since the mortality has decreasing trend as a whole, we employed the regularization term $(\theta_{i+1}-\theta_i)_+$ instead of $(\theta_i-\theta_{i+1})_+$.
	Figure~\ref{fig:rdd} shows the result.
	There is a sudden increase at 21 years old, which coincides with the minimum legal drinking age.
	Thus, it can be interpreted as the effect of drunk driving on the number of fatalities induced by motor vehicle accidents.
	In this way, the proposed method may be useful for RDD in some cases, especially when the threshold of treatment assignments is not known a priori and has to be estimated simultaneously with the treatment effect \citep{porter2015regression}.
	Note that this method is applicable to RDD with categorical outcomes as well \citep{xu2017regression}.
	Recently, \cite{babii2021isotonic} proposed an application of isotonic regression to RDD.
	
	Recently, RDD has been applied to situations where the treatment assignment is based on geographic boundaries \citep{keele2015geographic} and it is called the spatial RDD.
	From our viewpoint, some of spatial RDD can be viewed as piecewise monotone estimation under partial orders induced from the geographic boundaries.  
	It is an interesting future work to extend the proposed method to such partially ordered cases.
	Note that the isotonic regression is applicable to partial orders as well \citep[Chapter~1]{rwd88}, such as multi-dimensional lattices \citep{anevski2018asymptotic,beran2010least} and graphs \citep{minami2020estimating}.
	
	\begin{figure}[htbp]
\centering
\begin{tikzpicture}
\tikzstyle{every node}=[]
\begin{axis}[
xlabel={age},
width=7cm
]
\addplot[only marks,mark options={scale=0.3},
filter discard warning=false, unbounded coords=discard,
] table {
19.068492889 35.82933
19.150684357 35.63926
19.232875824 34.20565
19.315069199 32.27896
19.397260666 32.65097
19.479452133 32.72144
19.5616436 36.3852
19.643835068 34.18793
19.726026535 31.91047
19.80821991 30.57683
19.890411377 33.53165
19.972602844 33.60344
20.054794312 31.77282
20.136985779 34.29831
20.219177246 33.32011
20.301370621 32.7594
20.383562088 30.58905
20.465753555 31.28775
20.547945023 30.29893
20.63013649 30.23012
20.712327957 30.12258
20.794521332 29.74465
20.876712799 30.71792
20.958904266 30.41714
21.041095734 36.31681
21.123287201 32.5758
21.205478668 33.02229
21.287672043 35.10687
21.36986351 32.3587
21.452054977 32.45526
21.534246445 31.53369
21.616437912 30.09868
21.698629379 30.38829
21.780822754 31.74226
21.863014221 34.48193
21.945205688 31.54001
22.027397156 31.83429
22.109588623 29.55155
22.19178009 30.83699
22.273973465 29.85789
22.356164932 28.75559
22.4383564 27.64927
22.520547867 28.26014
22.602739334 29.23415
22.684930801 29.52876
22.767124176 27.44976
22.849315643 26.85506
22.931507111 27.3893
};
\addplot[very thick, color = blue, opacity=1,
] table {
19.068492889 35.82933
19.150684357 35.63926
19.232875824 34.20565
19.315069199 33.536781904761895
19.397260666 33.536781904761895
19.479452133 33.536781904761895
19.5616436 33.80707714285714
19.643835068 33.80707714285714
19.726026535 32.72162875
19.80821991 32.72162875
19.890411377 32.72162875
19.972602844 32.72162875
20.054794312 32.72162875
20.136985779 32.72162875
20.219177246 32.72162875
20.301370621 32.72162875
20.383562088 30.9384
20.465753555 30.9384
20.547945023 30.748385952380943
20.63013649 30.748385952380943
20.712327957 30.748385952380943
20.794521332 30.748385952380943
20.876712799 30.748385952380943
20.958904266 30.748385952380943
21.041095734 33.515698571428565
21.123287201 33.515698571428565
21.205478668 33.515698571428565
21.287672043 33.515698571428565
21.36986351 32.406980000000004
21.452054977 32.406980000000004
21.534246445 31.659878571428575
21.616437912 31.659878571428575
21.698629379 31.659878571428575
21.780822754 31.659878571428575
21.863014221 31.659878571428575
21.945205688 31.659878571428575
22.027397156 31.659878571428575
22.109588623 30.19427
22.19178009 30.19427
22.273973465 29.85789
22.356164932 28.75559
22.4383564 28.66808
22.520547867 28.66808
22.602739334 28.66808
22.684930801 28.66808
22.767124176 27.44976
22.849315643 27.12218
22.931507111 27.12218
};
\end{axis}
\end{tikzpicture}
\caption{Result on the minimum legal drinking age data. black: data, blue: generalized nearly isotonic regression.}
\label{fig:rdd}
\end{figure}
	
	\subsection{Discretization error quantification of ODE solvers}\label{sec:ODE}
	
	Numerical integration of ordinary differential equations (ODEs) plays an essential role in many research fields.
	It is used not only for the future prediction but also for estimating the past states and/or system parameters in data assimilation. 
	The theory of numerical analysis tells us how the error induced by the discretization (e.g., Euler, Runge--Kutta) propagates, but standard discussion focuses on asymptotic behavior as the discretization stepsize goes to zero~\citep{hn93,hw96}. 
	The expense of sufficiently accurate numerical integration is often prohibitive. 
	Thus, in such cases, quantifying the reliability of numerical integration is essential. 
	In the last few years, several approaches to quantifying the discretization error of ODE solvers have been developed (see, for example, \cite{ag20,cg17,cc16,cosg19,ls19,ocag19,tksh18,ts21}).
	Here, we apply the proposed method to discretization error quantification of ODE solvers.
	
	Consider the ordinary differential equation
	\begin{equation}
		\label{eq:ivp}
		\frac{\rmd}{\rmd t} x(t) = f(x(t)), \quad x(0) = x_0\in\bbR^{m},
	\end{equation}
	where the vector field $f:\bbR^m \to \bbR^m$ is assumed to be sufficiently differentiable.
	For time points $t_1,\dots,t_n$, let $x_i$ be an approximation to $x(t_i)$ obtained by applying an ODE solver such as Runge--Kutta to \eqref{eq:ivp}.
	Also, we assume that we have noisy observations $y_1,\dots,y_n$ of $x(t_1),\dots,x(t_n)$:
	\begin{equation}
		\label{obs_model} 
		y_i = x_k(t_i) + \varepsilon_i, \quad \varepsilon_i \sim \rmN (0,\gamma^2), \quad i = 1,\dots,n,
	\end{equation}
	where we focus on a specific element $x_k$ to simplify the notation.
	We consider quantifying the discretization error $\xi_i := (x_i)_k - x_k(t_i)$ for $i=1,\dots,n$ based on $x_1,\dots,x_n$ and $y_1,\dots,y_n$.
	Note that we do not necessarily intend to estimate the discretization error as precisely as possible;
	instead, we aim to capture the scale of the discretization error and its qualitative behavior such as periodicity.
	
	Building on our previous study~\citep{mm21b}, we model the discretization error as independent Gaussian random variables:
	\begin{equation}
		\label{de_model}
		\xi_i \sim \mathrm N (0,\sigma_i^2), \quad i = 1,\dots,n,
	\end{equation}
	where the variance $\sigma_i^2$ quantifies the magnitude of $\xi_i$.
	By substituting \eqref{de_model} into \eqref{obs_model}, we obtain
	\begin{equation}
		y_i = (x_i)_k + e_i, \quad e_i \sim \mathrm N (0,\gamma^2 + \sigma_i^2), \quad i = 1,\dots,n,
	\end{equation}
	where $e_i := - \xi_i + \varepsilon_i$.
	Thus, the square of the residual $r_i=y_i-(x_i)_k$ follows the chi-square distribution with one degree of freedom:
	\begin{equation}
		r_i^2 \sim (\gamma^2 + \sigma_i^2) \chi^2(1), \quad i=1,\dots,n.
	\end{equation}
	In the following, we introduce a block constraint on $\sigma_1^2,\dots,\sigma_n^2$ with block size $d \geq 1$:
	\[
	\sigma_{(j-1)d+1}^2=\cdots=\sigma_{jd}^2=\widetilde{\sigma}_j^2, \quad j=1,\dots,\frac{n}{d},
	\]
	where the block size $d$ controls the smoothness of $\sigma_1^2,\dots,\sigma_n^2$ and $n$ is assumed to be divisible by $d$ for simplicity\footnote{If $n$ is indivisible by $d$, we simply ignore the data for the remaining indices}.
	Then, by putting ${s}_j=r_{(j-1)d+1}^2+\cdots+r_{jd}^2$, we have
	\begin{equation}
		s_j \sim (\gamma^2 + \widetilde{\sigma}_j^2) \chi^2(d), \quad j=1,\dots,\frac{n}{d}.
	\end{equation}
	We apply the proposed method to estimate $\widetilde{\sigma}_1^2,\dots,\widetilde{\sigma}_{n/d}^2$ from $s_1,\dots,s_{n/d}$, where we employ Theorem~\ref{th:bounded} to guarantee $\widetilde{\sigma}_j^2 \geq 0$ for $j=1,\dots,n/d$.
	Note that the sequence $\widetilde{\sigma}_1^2,\dots,\widetilde{\sigma}_{n/d}^2$ is expected to be piecewise monotone increasing, since the discretization error basically accumulates in every step of numerical integration, with possible drops if the ODE has periodicity (see Figure~\ref{fig:FN_solutions1}).
	From simulation results in Appendix, $d \geq 3$ is recommended to avoid large bias of AIC.
	This method can be viewed as an extension of our previous approach with the generalized isotonic regression \citep{mm21b}.

	The rest of the subsection checks how the above formulation works for quantifying the discretization error of ODE solvers.
	The idea of the proposed method leads to an intuition that the formulation suits a problem for which the discretization error gets large as time passes but exhibits periodic nature locally.
	Thus, we employ the FitzHugh--Nagumo (FN) model \citep{f61,nay62}:
	\begin{equation} \label{eq:FN}
		\frac{\rmd V}{\rmd t}=c\bigg( V-\cfrac{V^3}{3}+R \bigg), \quad \frac{\rmd R}{\rmd t}=-\cfrac{1}{c} \paren*{V-a+bR}
	\end{equation}
	as a toy problem.
	Since the solution to the FN model is almost periodic, the discretization error also varies periodically as long as the numerical solution is stable and captures the periodic nature.
	
	We set the initial state and parameters to $V(0)=-1$, $R(0)=1$ and $(a,b,c) = (0.2,0.2,3.0)$.
	We apply the explicit Euler method with the step size $\Delta t = 0.025$ to \eqref{eq:FN}, and compare the numerical solution with the exact solution in Figure~\ref{fig:FN_solutions1}.
	It is observed that while the numerical approximations well capture the periodicity of the exact flow in a qualitative manner, its phase speed is slower than the exact flow, and the difference between the exact and numerical flows becomes significant as time passes.
	
	\input{fig_FN_solutions1}
	
	\begin{remark}
		Undoubtedly, it is easy to obtain much more accurate numerical solutions to the FN model.
		Nevertheless, we even employ the explicit Euler method with a relatively large step size as an example for which 
		sufficiently accurate numerical integration is hard to attain.
	\end{remark}

	Figure~\ref{fig:FN_V_d3} shows the result of discretization error quantification on $V$, where $V$ is observed with observation noise variance $0.01$ at $t_i = (i-1)h$ with $i=201,202,\dots,1200$ and $h=0.05$ (i.e., $V$ is observed for $t\in[10,60]$) and $d=3$.
	The top panel plots ${\rm AIC}(\lambda)$ with respect to $\lambda$.
	In this case, ${\rm AIC}(\lambda)$ is minimized at $\lambda_{\rm opt} = 0.1134$.
	The bottom panel plots the estimated discretization error ${\sigma}_i$ with the actual error $|V_i-V(t_i)|$, where the result of the generalized isotonic regression (i.e., sufficiently large $\lambda$) is also shown for comparison.
	It indicates that the proposed method with $\lambda_{\rm opt}$ captures the fluctuation of the discretization error in a more conformable manner than generalized isotonic regression. 
	We conducted similar experiments for $d=5,10$ and obtained almost the same discretization error quantification results.
	
	Figure~\ref{fig:FN_R_d3} shows the result for $R$, where the observation noise variance was set to $0.004$.
	The discussion for $V$ remains valid for $R$, although the error behavior for $R$ is different from that for $V$.
	For $V$, the error gets large moderately and then decreases quite sharply; for $R$, the error decreases moderately after a sharp increase.
	
	In summary, the proposed method can capture the periodicity and scale of the actual discretization error well compared with the previous one using the generalized isotonic regression.
	The new method seems beneficial in that, for example, we may be able to understand how the error propagates in a more accurate way and further detect recovery of the numerical reliability.
	This method is expected to be useful in the inverse problem framework, and we leave further discussions to our future work.
	
	\input{FN_V_d3}
	
	\input{FN_R_d3}
	
	\section{Conclusion}\label{sec:conclusion}
	In this study, we extended nearly isotonic regression to general one-parameter exponential families such as binomial, Poisson and chi-square.
	We developed an efficient algorithm based on the modified PAVA and provided a method for selecting the regularization parameter by using an information criterion.
	Simulation results demonstrated that the proposed method detects change-points in piecewise monotone parameter sequences in a data-driven manner.
	We presented applications to spectrum estimation, causal inference and discretization error quantification of ODE solvers.
	
	While we focused on simply ordered cases in this study, isotonic regression is also applicable to partially ordered cases \citep[Chapter~1]{rwd88}.
	Recent studies considered multi-dimensional lattices \citep{anevski2018asymptotic,beran2010least} and graphs \citep{minami2020estimating}.
	It is an interesting future work to extend the proposed method to such settings.
	Such a generalization may be applicable to spatial regression discontinuity design as well as discretization error quantification of PDE solvers, which would be useful for reliable simulation as well as large-scale data assimilation.
	
	We proposed an information criterion for selecting the regularization parameter based on a rather heuristic argument.
	Although it works practically well as long as the model is not very far from Gaussian, the bias is non-negligible in several cases such as the chi-square with a few degrees of freedom.
	It is a future problem to derive a more accurate information criterion like the one in \cite{ninomiya2016aic}.
	Note that the number of parameters grows with the sample size here, and thus the usual argument of Akaike information criterion is not directly applicable.
	Derivation of risk bounds like the one in \cite{minami2020estimating} is another interesting direction for future work.
	
	\section*{Acknowledgements}
We thank Yuya Shimizu and Koki Fusejima for helpful comments.
We thank Grace Chen for finding a bug of our code.
Takeru Matsuda was supported by JSPS KAKENHI Grant Numbers 19K20220, 21H05205 and 22K17865, and JST Moonshot Grant Number JPMJMS2024.
Yuto Miyatake was supported by JSPS KAKENHI Grant Numbers 20H01822, 20H00581 and 21K18301, and JST ACT-I Grant Number JPMJPR18US.

\bibliographystyle{chicago}      
\bibliography{references}   

\appendix

\section{Background}\label{sec:background}
\subsection{Order restricted MLE of normal means}\label{sec:isotonic}
As discussed in the Introduction, order restricted MLE of normal means is reduced to the isotonic regression problem \eqref{iso}.
Since this is a convex optimization over the closed set of points $(\mu_1,\dots,\mu_n)$ that satisfy $\mu_1 \leq \dots \leq \mu_n$, the maximum likelihood estimator uniquely exists.
Figure~\ref{fig:background} presents an example.
This problem is efficiently solved by the pool adjacent violators algorithm (PAVA) given in Algorithm~\ref{alg:pava}.
See Chapter~1 of \cite{rwd88} for details.

\begin{algoalgo}[Pool adjacent violators algorithm, PAVA]\label{alg:pava}
	\mbox{}
	\begin{itemize}
		\item Start with $K=n$ clusters $A_i=\{ i \}$ with values $y_{A_i}=x_i$ for $i=1,\dots,n$.
		
		\item Repeat:
		\begin{itemize}
			\item If $y_{A_{j-1}}>y_{A_j}$ for some $j$, then merge $A_j$ into $A_{j-1}$, set its value to $(|A_{j-1}| y_{A_{j-1}}+|A_j| y_{A_j})/(|A_{j-1}|+|A_j|)$, renumber $A_{j+1},\dots,A_K$ to $A_j,\dots,A_{K-1}$ and decrease $K$ by one.
		\end{itemize}
		
		\item Return $\hat{\mu}$ with $\hat{\mu}_i=y_{A_j}$ for $i \in A_j$
	\end{itemize}
\end{algoalgo}

\input{background_fig.tex}

\subsection{Piecewise monotone estimation of normal means}\label{sec:neariso}
As discussed in the Introduction, piecewise monotone estimation of normal means (nearly isotonic regression) is formulated as \eqref{org_neariso}.
Each of the regularization term $(\mu_i-\mu_{i+1})_+$ is piecewise linear and non-differentiable at $\mu_i=\mu_{i+1}$.
This property leads to $(\hat{\mu}_{\lambda})_i \leq (\hat{\mu}_{\lambda})_{i+1}$ for sufficiently large $\lambda$ in the same way that the $l_1$ regularization term provides a sparse solution in LASSO.
Figure~\ref{fig:background} plots this estimator with $\lambda=3.68$.
Compared to the solution of isotonic regression, this estimator successfully captures the drop of $\mu_i$ around $i=100$.
Note that nearly isotonic regression coincides with isotonic regression when the regularization parameter $\lambda$ is sufficiently large.
Recently, \cite{minami2020estimating} investigated the risk bound of nearly isotonic regression.

The nearly isotonic regression is efficiently solved by a modification of PAVA (Algorithm~\ref{alg:wmpava} in the next Section with $w_1=\dots=w_n=1$).
This algorithm outputs the regularization path by computing the set of critical points (knots) $\lambda_0=0,\lambda_1,\dots,\lambda_T$ and the estimate $\hat{\mu}_{\lambda_t}$ at each critical point.
Since the solution path is piecewise linear between the critical points, the solution for general $\lambda$ is readily obtained by linear interpolation.

\begin{remark}
	For isotonic regression, several algorithms other than PAVA have been developed, such as the minimum lower set algorithm \citep[Section~1.4]{rwd88}.
	It is an interesting future work to extend these algorithms to nearly isotonic regression.
\end{remark}

In practice, it is important to select an appropriate value of the regularization parameter $\lambda$ based on data.
For this aim, \cite{tibshirani2011nearly} derived an unbiased estimate of the degrees of freedom \citep{efron2004estimation} of nearly isotonic regression.
Here, we briefly review this result.
Suppose that we have an observation $X \sim {\rm N}_n(\mu,\sigma^2 I)$ and estimate $\mu$ by an estimator $\hat{\mu}=\hat{\mu}(X)$, where $\sigma^2$ is known.
From Stein's lemma, the mean squared error of $\hat{\mu}$ is given by
\begin{align*}
	{\rm E}_{\mu} \left[ {\| \hat{\mu}-\mu \|^2} \right] = {\rm E}_{\mu} \left[ {\| \hat{\mu}-X \|^2} \right] + 2 \sigma^2 {\rm df}_{\mu} (\hat{\mu}) - n \sigma^2,
\end{align*}
where 
\begin{align*}
	{\rm df}_{\mu}(\hat{\mu}) = {\rm E}_{\mu} \left[ \sum_{i=1}^n \frac{\partial \hat{\mu}_i}{\partial x_i} (X) \right]
\end{align*}
is called the degrees of freedom of $\hat{\mu}$.
For example, the degrees of freedom of a linear estimator $\hat{\mu}=AX$ do not depend on $\mu$ and is equal to ${\rm tr} A$.
In general, the degrees of freedom depend on $\mu$ and unbiased estimates of them have been derived, which can be used for the penalty term of model selection criteria such as Mallows' $C_p$, AIC and BIC.
For isotonic regression, \cite{meyer2000degrees} showed that the number of joined pieces is an unbiased estimate of the degrees of freedom.
For LASSO, \cite{zou2007degrees} showed that the number of nonzero regression coefficients is an unbiased estimate of the degrees of freedom.
\cite{tibshirani2011nearly} proved a similar result for nearly isotonic regression as follows.

\begin{proposition}\citep{tibshirani2011nearly}\label{prop:df}
	Let $K_{\lambda}$ be the number of joined pieces in $\hat{\mu}_{\lambda}$.
	Then,
	\[
	{\rm E}_{\mu} [K_{\lambda}] = {\rm df}_{\mu} (\hat{\mu}_{\lambda}).
	\]
\end{proposition}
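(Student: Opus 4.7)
The plan is to express ${\rm df}_\mu(\hat{\mu}_\lambda)$ as the expectation of the divergence $\sum_{i=1}^n \partial\hat{\mu}_{\lambda,i}/\partial x_i$ via Stein's identity and then to show pointwise, for Lebesgue-a.e.\ $x$, that this divergence equals $K_\lambda(x)$. Because the objective in \eqref{org_neariso} is $1$-strongly convex in $\mu$ with a convex penalty, non-expansiveness of the proximal map shows that $x\mapsto\hat{\mu}_\lambda(x)$ is $1$-Lipschitz, hence weakly differentiable, which legitimizes applying Stein's identity to $\hat{\mu}_\lambda$.

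First, I would argue that the combinatorial structure of the solution is locally constant in $x$ outside a Lebesgue null set. Call a \emph{pattern} a partition of $\{1,\dots,n\}$ into consecutive flat pieces $A_1,\dots,A_K$ with $A_j=\{l_j,\dots,r_j\}$, together with signs $\mathrm{sgn}(\hat{\mu}_{\lambda,r_j}-\hat{\mu}_{\lambda,r_j+1})\in\{-1,+1\}$ between adjacent pieces. Each pattern is realized on an open set cut out by finitely many strict linear inequalities in $x$; its boundary, where adjacent pieces have equal fitted value or where an internal pair is about to split, is contained in a finite union of hyperplanes, and this remains true after taking the union over the finitely many patterns. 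On a neighborhood where the pattern is fixed, the KKT stationarity condition of \eqref{org_neariso} reads $\hat{\mu}_{\lambda,i}=x_i+\lambda(s_{i-1}-s_i)$, where $s_i\in[0,1]$ is a subgradient of the positive-part function at $\hat{\mu}_{\lambda,i}-\hat{\mu}_{\lambda,i+1}$, with $s_0=s_n=0$; between two distinct clusters, $s_i\in\{0,1\}$ is determined by the sign of the jump and is locally constant in $x$. Summing the stationarity equation over $i\in A_j$ and telescoping yields
\[
|A_j|\,y_{A_j}=\sum_{i\in A_j} x_i+\lambda(s_{l_j-1}-s_{r_j}),
\]
so $\partial y_{A_j}/\partial x_i=1/|A_j|$ for $i\in A_j$ and $0$ otherwise. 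Therefore
\[
\sum_{i=1}^n\frac{\partial\hat{\mu}_{\lambda,i}}{\partial x_i}=\sum_{j=1}^{K_\lambda}\sum_{i\in A_j}\frac{1}{|A_j|}=K_\lambda,
\]
and taking expectations yields the proposition.

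The hard part is the measure-theoretic bookkeeping: $\hat{\mu}_\lambda$ is non-smooth precisely on the boundaries where the pattern changes, and one must verify carefully that these boundaries form a Lebesgue null set and that the $1$-Lipschitz regularity of $\hat{\mu}_\lambda$ suffices to pass from the a.e.\ pointwise divergence to the expected weak divergence appearing in Stein's identity. Once this is in place, the pattern analysis reduces to a bookkeeping exercise on the KKT conditions.
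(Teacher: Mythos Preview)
The paper does not prove this proposition; it is stated with a citation to \cite{tibshirani2011nearly} and used as a black box. Your proposal is correct and is essentially the argument given in the original Tibshirani et al.\ paper: fix the active partition, read off from the KKT conditions that each block value is the block average of the $x_i$ plus a $\lambda$-dependent constant determined only by the signs of the neighboring jumps, so the divergence is $\sum_j |A_j|\cdot(1/|A_j|)=K_\lambda$; then invoke Stein's lemma using the Lipschitz property of the proximal map and the fact that the partition changes only on a finite union of hyperplanes.
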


Therefore, the quantity
\begin{align*}
	\hat{C}_p (\lambda) = {\| \hat{\mu}_{\lambda}-X \|^2} + 2 \sigma^2 K_{\lambda} - n\sigma^2 
\end{align*}
is an unbiased estimate of the mean squared error of $\hat{\mu}_{\lambda}$:
\begin{align*}
	{\rm E}_{\mu} [\hat{C}_p(\lambda)] = {\rm E}_{\mu} \left[ {\| \hat{\mu}_{\lambda}-\mu \|^2} \right].
\end{align*}
Thus, \cite{tibshirani2011nearly} selected the regularization parameter $\lambda$ by minimizing $\hat{C}_p(\lambda)$ among the knots: 
\begin{align}
	\hat{\lambda} = \lambda_{\hat{k}}, \quad \hat{k} = \argmin_k \hat{C}_p(\lambda_k). \label{lambda_hat}
\end{align}
The value of $\lambda$ in Figure~\ref{fig:background} was selected by this method.

\subsection{Order restricted MLE in one-parameter exponential families}\label{sec:exp}
Consider a one-parameter exponential family
\begin{align}\label{expfamily}
	p(x \mid \theta) = h(x) \exp \left( \theta x - \psi (\theta) \right),
\end{align}
where $\psi$ is a smooth convex function.
This class includes many standard distributions such as binomial, Poisson and gamma \citep{lehmann2006theory,Efron}.
The binomial distribution ${\rm Bi}(N,r)$ with $N$ (fixed) trials of success probability $r$ corresponds to $x \in \{ 0,1,\dots,N \}$, $h(x) = N!/(x! (N-x)!)$ and $\psi(\theta) = N \log (1+e^{\theta})$, where $r=e^{\theta}/(1+e^{\theta})$.
The Poisson distribution ${\rm Po}(\lambda)$ with mean $\lambda$ corresponds to $x \in \{ 0,1,\dots \}$, $h(x) = 1/(x!)$ and $\psi(\theta) = e^{\theta}$, where $\lambda=e^{\theta}$.
The gamma distribution ${\rm Ga}(a,b)$ with shape $a$ (fixed) and scale $b$ corresponds to $x \geq 0$, $h(x) = x^{a-1}/\Gamma(a)$ and $\psi(\theta) = -a \log (-\theta)$, where $b=-1/\theta$, and it reduces to the chi-square distribution $\chi^2(d)$ with $d$ degrees of freedom when $a=d/2$ and $b=2$.
Also, the normal distribution ${\rm N}(\theta,1)$ with mean $\theta$ and variance one corresponds to $x \in \mathbb{R}$, $h(x)=(2 \pi)^{-1/2} \exp (-x^2/2)$ and $\psi(\theta) = \theta^2/2$.

Exponential families have two canonical parametrizations called the natural parameter $\theta$ and the expectation parameter $\eta={\rm E}_{\theta}[X]$.
They are dual in the sense that they have one-to-one correspondence given by $\eta=\psi'(\theta)$, which is related to the Legendre transform of the convex function $\psi$.
This duality plays a central role in information geometry and $\theta$ and $\eta$ are called the e-coordinate and m-coordinate, respectively \citep{amari2016information}.
Note that the normal model is self-dual: $\theta=\eta$.
The relation $\eta=\psi'(\theta)$ appears in the derivation of the first moment from the moment generating function.
See (5.14) in \cite{lehmann2006theory}.

For one-parameter exponential families, maximum likelihood estimation under order constraints reduces to a problem called the generalized isotonic regression and it is efficiently solved by PAVA as well \citep[Section~1.5]{rwd88}.
Suppose that we have $n$ observations $X_i \sim p(x \mid \theta_i)$ for $i=1,\dots,n$ where $\theta_1 \leq \cdots \leq \theta_n$.
Then, the maximum likelihood estimate of $\theta$ under the order constraint is given by 
\begin{align*}
	\hat{\theta} &= \argmax_{\theta_1 \leq \cdots \leq \theta_n} \sum_{i=1}^n \log p(X_i \mid \theta_i) = \argmin_{\theta_1 \leq \cdots \leq \theta_n} \sum_{i=1}^n (- \theta_i X_i + \psi(\theta_i)).
\end{align*}
This constrained optimization is solved by PAVA as follows.

\begin{proposition}\citep[Theorem~1.5.2]{rwd88}
	Let $\hat{\eta}=(\hat{\eta}_1,\dots,\hat{\eta}_n)$ be the output of PAVA on the realization $(x_1,\dots,x_n)$ of $(X_1,\dots,X_n)$. 
	Then, the maximum likelihood estimate of $\theta$ is given by $\hat{\theta}=(\hat{\theta}_1,\dots,\hat{\theta}_n)$ where $\hat{\theta}_i=(\psi')^{-1}(\hat{\eta}_i)$ for $i=1,\dots,n$.
\end{proposition}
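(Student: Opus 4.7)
The plan is to show that the Karush--Kuhn--Tucker (KKT) conditions of the generalized problem reduce, under the substitution $\eta_i = \psi'(\theta_i)$, to those of the ordinary isotonic regression of $(x_1,\dots,x_n)$, for which PAVA is known to output the unique minimizer. To set things up, I would form the Lagrangian
\begin{equation*}
L(\theta,\mu) = \sum_{i=1}^n \bigl(-\theta_i x_i + \psi(\theta_i)\bigr) + \sum_{i=1}^{n-1} \mu_i (\theta_i - \theta_{i+1}),
\end{equation*}
with multipliers $\mu_i \geq 0$ and the convention $\mu_0 = \mu_n = 0$. Stationarity in $\theta_i$ gives $\psi'(\theta_i) = x_i + \mu_{i-1} - \mu_i$, together with the complementary slackness $\mu_i(\theta_i - \theta_{i+1}) = 0$ and primal feasibility $\theta_i \leq \theta_{i+1}$.

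Since $\psi$ is strictly convex on the natural parameter space of a non-degenerate exponential family, the derivative $\psi'$ is a strictly increasing bijection onto its image. Consequently, setting $\eta_i := \psi'(\theta_i)$ preserves the order ($\theta_i \leq \theta_{i+1} \iff \eta_i \leq \eta_{i+1}$) and turns the complementary slackness into $\mu_i(\eta_i - \eta_{i+1}) = 0$. In the new variables the KKT system reads
\begin{equation*}
\eta_i = x_i + \mu_{i-1} - \mu_i, \qquad \mu_i \geq 0, \qquad \mu_i(\eta_i - \eta_{i+1}) = 0, \qquad \eta_i \leq \eta_{i+1},
\end{equation*}
which is precisely the KKT system for the standard isotonic regression $\min_{\eta_1 \leq \cdots \leq \eta_n} \tfrac{1}{2}\sum_{i=1}^n (x_i - \eta_i)^2$.

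Both problems are strictly convex on their feasible sets (the generalized objective is a sum of strictly convex terms, and the quadratic isotonic objective is strictly convex), so each admits a unique minimizer, fully characterized by its KKT conditions. The translation above therefore forces the unique isotonic solution $\hat{\eta}$, which PAVA computes, to satisfy $\hat{\eta}_i = \psi'(\hat{\theta}_i)$; equivalently, $\hat{\theta}_i = (\psi')^{-1}(\hat{\eta}_i)$, as claimed. The only delicate point --- rather than a true obstacle --- is confirming strict convexity of $\psi$, i.e.\ the bijectivity of $\psi'$, which is standard for regular exponential families and is precisely the property underlying the $\theta\leftrightarrow\eta$ duality already invoked in the surrounding text.
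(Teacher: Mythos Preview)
Your proof is correct. The paper does not give its own proof of this proposition (it is cited from \cite{rwd88}), but your KKT argument is precisely the strategy the paper uses to prove Theorem~\ref{thm:gniso_uw}, the regularized analogue: match the stationarity conditions of the exponential-family problem and the quadratic problem via the monotone bijection $\eta=\psi'(\theta)$, then invoke convexity for uniqueness.
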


\section{Proof of Lemma 1}
\begin{proof}
	
	We follow a similar discussion to \cite{tibshirani2011nearly}.
	The KKT condition \citep[Section~5.5.3]{bv04} for \eqref{ggneariso} is
	\begin{equation}
		\label{eq:KKT}
		w_i \psi'(\hat{\theta}_{\lambda,i}) - X_i + \lambda (s_{\lambda,i} - s_{\lambda,i-1}) = 0 \quad \text{for }i=1,\dots,n,
	\end{equation}
	where 
	\begin{equation*}
		s_{\lambda,i} \begin{cases}
			= 1 & (\hat{\eta}_{\lambda,i}- \hat{\eta}_{\lambda,i+1} > 0) \\
			= 0 & (\hat{\eta}_{\lambda,i}- \hat{\eta}_{\lambda,i+1} < 0) \\
			\in [0,1] & (\hat{\eta}_{\lambda,i}- \hat{\eta}_{\lambda,i+1} = 0) 
		\end{cases}.
	\end{equation*}
	Suppose that 
	\begin{equation*}
		(\hat{\mu}_{\tilde{\lambda},j-1}\neq) 
		\hat{\mu}_{\tilde{\lambda},j} = \hat{\mu}_{\tilde{\lambda},j+1} = \cdots = 
		\hat{\mu}_{\tilde{\lambda},j+k} (\neq \hat{\mu}_{\tilde{\lambda},j+k+1})
	\end{equation*}
	for some $\lambda = \tilde{\lambda}(\geq 0)$.
	Then, we have $s_{\tilde{\lambda},j-1}, s_{\tilde{\lambda},j+k} \in \{0,1\}$ and these values remain constant as $\lambda$ increases as long as $\hat{\mu}_{\lambda,j-1}\neq
	\hat{\mu}_{\lambda,j}$ and $\hat{\mu}_{\lambda,j+k} \neq \hat{\mu}_{\lambda,j+k+1}$.
	We need to show that the KKT condition \eqref{eq:KKT} admits the solution 
	\begin{align}
		& \hat{\mu}_{\lambda,j} = \hat{\mu}_{\lambda,j+1} = \cdots = 
		\hat{\mu}_{\lambda,j+k}, \label{kktsol1} \\
		& s_{\lambda,j},s_{\lambda,j+1},\dots,s_{\lambda,j+k-1} \in [0,1] \label{kktsol2}
	\end{align}
	for $\lambda \geq \lambda_0$.
	Below, assuming \eqref{kktsol1} for $\lambda\geq\tilde{\lambda}$ and \eqref{kktsol2} for $\lambda = \tilde{\lambda}$, we show that the corresponding $s_{\lambda,i}$ satisfy \eqref{kktsol2} for $\lambda > \tilde{\lambda}$.
	
	From the KKT condition \eqref{eq:KKT}, we have 
	$w_i (\hat{\mu}_{\lambda,i} - x_i) + \lambda(s_i-s_{i-1})=0$ for $i=j,\dots,j+k$, and this relation can be rewritten as $w_{i+1} (\hat{\mu}_{\lambda,i+1} - x_{i+1}) + \lambda(s_{i+1}-s_{i})=0$ for $i=j-1,\dots,j+k-1$.
	For $\lambda\geq \tilde{\lambda}$,
	multiplying these two expressions by $w_{i+1}$ and $w_i$, respectively, and considering the subtraction lead to
	\begin{align}
		& \underbrace{\begin{bmatrix}
				w_j + w_{j+1} & -w_j & & &  \\
				-w_{j+2} & w_{j+1} + w_{j+2} & -w_{j+1} & & \\
				& -w_{j+3} & w_{j+2} + w_{j+3} & -w_{j+2} & \\
				& &\ddots &\ddots &\ddots \\
				& & & -w_{j+k} & w_{j+k-1} + w_{j+k} & w_{j+k-1}
		\end{bmatrix}}_{=:A \in \bbR^{k\times k}}
		\underbrace{\begin{bmatrix}
				s_{\lambda,j} \\ s_{\lambda,j+1} \\ \vdots \\  \vdots \\ s_{\lambda,j+k-1} 
		\end{bmatrix}}_{=:s_\lambda \in\bbR^k} \\
		& \quad =  \frac{1}{\lambda}
		\underbrace{\begin{bmatrix}
				w_j w_{j+1} & - w_j w_{j+1} & & & & \\
				& w_{j+1} w_{j+2} & - w_{j+1} w_{j+2} & & &  \\
				& & & \ddots & \ddots & \\
				& & & & w_{j+k-1} w_{j+k} & - w_{j+k-1} w_{j+k} 
		\end{bmatrix}}_{=: D \in \bbR^{k\times (k+1)}}
		\underbrace{\begin{bmatrix}
				X_{j} \\ X_{j+1} \\ \vdots \\ X_{j+k}
		\end{bmatrix}}_{=:y\in\bbR^{k+1}} \\
		& \quad \phantom{=} \quad 
		+ \underbrace{\begin{bmatrix}
				w_{j+1} & & & & \\ 
				& 0 & & & \\
				& & \ddots & & \\
				& & & 0 & \\
				& & & & w_{j+k-1}
		\end{bmatrix}}_{=:E\in\bbR^{k\times k}}
		\underbrace{\begin{bmatrix}
				s_{\lambda,j-1} \\ 0 \\ \vdots \\ 0 \\ s_{\lambda,j+k} 
		\end{bmatrix}}_{=:c_\lambda \in\bbR^k},
	\end{align}
	where the assumption \eqref{kktsol1} is used.
	It is easy to show that $A$ is non-singular when all weights are positive; thus, we have
	\begin{equation*}
		s_\lambda = \frac{1}{\lambda} A^{-1} D y + A^{-1} E c_\lambda .
	\end{equation*}
	Since we have assumed \eqref{kktsol2} for $\lambda = \tilde{\lambda}$,
	all elements of $s_\lambda$ are in $[0,1]$ when $\lambda = \tilde{\lambda}$.
	It remains to show that all elements of $s_\lambda$ remain in $[0,1]$ when $\lambda \geq \tilde{\lambda}$.
	As $\lambda$ increases, the first term of the right-hand-side gets smaller in magnitude. 
	Therefore, if $A^{-1}Ec_\lambda$ is in $[0, 1]$ coordinate-wise, then the right-hand-side will stay in $[0, 1]$ for increasing $\lambda$.
	Below we show that every element of $A^{-1}Ec_\lambda$ is in $[0,1]$.
	
	Note that the first and last elements of $c_\lambda$ is either $0$ or $1$, and all elements of $A^{-1} E$ except for the first and last ($k$-th) columns are zero.
	We will check that every element of the first and last columns of $A^{-1} E$ is positive, and $(A^{-1}E)_{i1} + (A^{-1}E)_{ik} = 1$, which readily indicates that $A^{-1}Ec_\lambda$ is in $[0, 1]$.
	By Cramer's rule, we have
	\begin{equation}
		(A^{-1}E)_{i1} = \frac{\begin{vmatrix} 
				& & & w_{j+1} & & \\
				& & & 0 & & \\
				a_1 & a_2 & \cdots & \vdots & \cdots & a_k \\
				& & & 0 & & \\
				& & & 0 & &
		\end{vmatrix}}{|A|}, \quad
		(A^{-1}E)_{ik} = \frac{\begin{vmatrix} 
				& & & 0 & & \\
				& & & 0 & & \\
				a_1 & a_2 & \cdots & \vdots & \cdots & a_k \\
				& & & 0 & & \\
				& & & w_{j+k} & &
		\end{vmatrix}}{|A|},
	\end{equation}
	where $|\cdot|$ denotes the determinant of a matrix, and $a_i$ denotes the $i$-th column of $A$.
	Here, the numerators and denominator $|A|$ are positive, which can be proved by induction.
	Thus, every element of the first and last columns of $A^{-1} E$ is positive.
	Further, since $(w_{j+1},0,\dots,0,w_{j+k})^\top = \sum_{i=1}^k a_k$,
	it follows that
	\begin{equation}
		(A^{-1}E)_{i1} + (A^{-1}E)_{ik} = \frac{\begin{vmatrix} 
				& & & w_{j+1} & & \\
				& & & 0 & & \\
				a_1 & a_2 & \cdots & \vdots & \cdots & a_k \\
				& & & 0 & & \\
				& & & w_{j+k} & &
		\end{vmatrix}}{|A|}
		= \frac{|A|}{|A|} = 1.
	\end{equation}
\end{proof}


\section{Proof of Proposition 1}
\begin{proof}
	We consider the case of $\beta=\infty$ without loss of generality.
	Since \eqref{bounded} is a convex program, the necessary and sufficient condition for its optimal solution is given by the KKT condition \citep[Section~5.5.3]{bv04}:
	\begin{align*}
		-x_i + \psi'(\theta_i) + \lambda (\rho_i-\rho_{i-1}) + \nu_i &= 0, \\
		\nu_i(\theta_i-\alpha)&=0, \\
		\rho_i &
		\begin{cases}
			= 1 & (\theta_i > \theta_{i+1}) \\
			= 0 & (\theta_i < \theta_{i+1})\\
			\in [0,1] & (\theta_i = \theta_{i+1})
		\end{cases}, \\
		\theta_i & \geq \alpha, \\
		\nu_i & \geq 0
	\end{align*}
	for $i=1,\dots,n$.
	From \eqref{KKT1} and \eqref{KKT2}, it is satisfied by taking $\theta_i = \max((\hat{\theta}_{\lambda})_i,\alpha)$, $\rho_i=\xi_i$ and
	\begin{align*}
		\nu_i =
		\begin{cases}
			0 & (\theta_i > \alpha) \\
			\psi'((\hat{\theta}_{\lambda})_i)-\psi'(\alpha) & (\theta_i = \alpha)
		\end{cases}
	\end{align*}
	for $i=1,\dots,n$.
	Note that $\nu_i \geq 0$ since $\psi$ is convex and thus $\psi'$ is monotone increasing. 
\end{proof}

\section{Simulation result for chi-square}
\label{subsec:chi-square}
We check the performance of the proposed method for the chi-square distribution.
For $i=1,\dots,100$, let $X_i \sim s_i \chi^2(d_i)$ be a sample from the chi-square distribution with $d_i$ degrees of freedom, where $s_1,\dots,s_{100}$ is a piecewise monotone sequence defined by
\begin{align*}
	s_i = \begin{cases} 1 + 9 \cdot \frac{i-1}{49} & (i=1,\dots,50) \\ 1 + 9 \cdot \frac{i-51}{49} & (i=51,\dots,100) \end{cases}.
\end{align*}
We apply the proposed method to estimate $s_1,\dots,s_{100}$ from $X_1,\dots,X_{100}$.

First, we set $d_i=5$ for $i=1,\dots,100$.
Figure~\ref{fig:chisquare} shows $\hat{s}_{\lambda}$ for several knot values of $\lambda$.
Similarly to the original nearly isotonic regression, the estimate is piecewise monotone and the number of joined pieces decreases as $\lambda$ increases.
In this case, $\hat{s}_{\lambda}$ becomes monotone at the final knot $\lambda = 270.04$ and it coincides with the result of the proposed method.
Figure~\ref{fig:chisquare_AIC} plots ${\rm AIC}(\lambda)$ with respect to $\lambda$.
It takes minimum at $\hat{\lambda}=80.68$, which corresponds to the third panel of Figure~\ref{fig:chisquare}.
In this way, the proposed information criterion enables to detect change-points in the parameter sequence in a data-driven manner.

\input{chisquare.tex}

\begin{figure}[htbp]
\centering
\begin{tikzpicture}
\tikzstyle{every node}=[]
\begin{axis}[
width=7cm,
xlabel={$\lambda$},
ylabel={${\rm AIC}(\lambda)$},
]
\addplot[very thick, color = blue, opacity=1,
] table {
0 889.2815157622417
0.19167960017133368 887.4315125341731
0.2900041583501878 885.6033738122727
0.6310232478728657 884.543163488562
0.8150263728565306 883.1915797218596
1.2912892527345252 883.2041541247534
1.7288203842377494 883.3523935122727
2.0056280599653715 882.8222269007733
2.606114892573452 884.1244699791847
2.6402791829544054 882.3157480422294
3.0111605130177534 882.4359313423763
3.0614097169377694 880.6454022164586
3.1376939036729916 878.944797685622
3.6424280049674462 878.9483556266331
4.249351252683464 879.4463755029911
4.725119110587546 879.4836872142064
4.9339176625230365 878.3953625514945
5.1455488272040135 877.3226939939008
5.363725433878596 876.290640430937
5.491657872672317 874.8565010332483
6.187732693132779 878.0148360152783
6.467892091924685 875.1461592061504
7.076936618259155 875.6780091680854
7.335932527822142 874.7751800078951
7.530417801116585 873.6104072852586
7.804810272203562 872.7789511190026
7.9346315177929165 871.3358834401437
8.015367317890327 871.684093326465
8.027035167003934 867.7329964288857
8.146335114934386 866.2016201084477
8.810792181471916 866.8236869613662
8.958078838299025 865.3405682339929
8.977821036526247 863.4063062929564
8.984351053090503 861.4270634223865
9.018382295339336 859.5268970799802
9.488929931755028 858.7974272109168
10.41286240421714 859.3838485701522
11.441347297731358 862.2173932091459
11.448557294301285 858.235303773668
12.377680047871824 858.4535670826947
13.471275118722122 861.1341428575229
13.525646242441809 857.2396929876943
15.167242067364768 858.2566604886606
15.516671383226686 856.8989465337372
15.929441265355063 855.6564460149403
16.015551794893835 853.8018271151489
16.154101584799598 852.0263750718711
16.199276366084284 850.0921546283064
17.207300166060946 849.5534970635358
17.310415564628006 847.7024544736717
18.01523052041775 846.6871718260645
18.701355600567386 845.5978138397483
18.773669362853003 843.6940745536908
19.096442375520194 842.1169802427985
20.648183812211673 842.0022507567272
20.79220923052878 840.167004456553
22.124931602671975 839.6548411723603
25.2768265388451 841.3852137817457
25.73675129416817 839.9615378532781
25.86961813306011 838.1093838063257
26.876163463823044 837.197888988566
27.117174093817603 835.4455712057874
31.069725587040047 836.875276237086
31.562031508599198 835.3139520130952
31.657355099547686 833.3969147975124
32.923269181962816 832.2761320567387
34.13101706277677 830.9108335947828
36.59782962831848 830.2004724968551
36.60620030533941 828.2048849213169
39.55330192044022 827.5403959134733
41.31856094010499 826.3446038339198
51.436094604680264 829.119514807633
53.75107143304206 830.250523557593
57.02474702791596 827.6531541463445
60.74963599119365 827.1135459160673
67.49442406868506 828.0590403293606
70.33140337657312 827.1442247078384
76.38820096426247 827.0891653250009
76.8272597114776 825.2357323855337
80.68680806399138 824.5631594876462
83.79743217564044 825.6970012513924
117.12287396435164 827.309386518126
147.93475965303367 830.9570868694765
148.8055375293766 829.1184287762968
155.27780215294948 828.2193335686726
261.79220925259415 844.266865123351
270.0494438403357 843.3172597170166
};
\end{axis}
\end{tikzpicture} 
\caption{AIC for the chi-square distribution ($d=5$).}
\label{fig:chisquare_AIC}
\end{figure}

Next, we set $d_i=d$ for $i=1,\dots,100$ with $d \in \{ 2,3,5,10 \}$.
Figure~\ref{fig:chisquare_bias} plots ${\rm E}_{\theta}[{\rm AIC}(\lambda)]$ and $2 {\rm E}_{\theta} [L(\theta,\hat{\theta}_{\lambda})]$ with respect to $\lambda$ for each value of $d$, where we used 10000 repetitions.
They take minimum at similar values of $\lambda$.
The absolute bias $|{\rm E}_{\theta}[{\rm AIC}(\lambda)]-2 {\rm E}_{\theta} [D(\theta,\hat{\theta}_{\lambda})]|$ decreases as $d$ increases, which is compatible with the fact that the chi-square distribution becomes closer to the normal distribution for larger $d$.

\input{chisquare_bias.tex}

Finally, we examine the case where the degrees of freedom are not constant:
\begin{align*}
	d_i = \begin{cases} 6 & (i=1,6,\dots,96) \\ 7 & (i=2,7,\dots,97) \\ 8 & (i=3,8,\dots,98) \\ 9 & (i=4,9,\dots,99) \\ 10 & (i=5,10,\dots,100) \end{cases}.
\end{align*}
Figure~\ref{fig:chisquare_bias_weighted} plots ${\rm E}_{\theta}[{\rm AIC}(\lambda)]$ and $2 {\rm E}_{\theta} [D(\theta,\hat{\theta}_{\lambda})]$ with respect to $\lambda$, where we used 10000 repetitions.
The bias of the proposed information criterion is sufficiently small.
Thus, this criterion works well for determining the regularization parameter $\lambda$ even when the degrees of freedom are heterogeneous among samples.

\input{chisquare_bias_weighted.tex}

\end{document}